\newtheorem{theorem}{Theorem}
\newtheorem{definition}{Definition}
\newtheorem{remark}{Remark}
\begin{document}
\title{A Game-Theoretic Approach to Covert Communications} 

\author{Alex S. Leong, Daniel E. Quevedo,  and Subhrakanti Dey
\thanks{A. Leong and D. Quevedo are with the Department of Electrical Engineering (EIM-E), Paderborn University, Paderborn, Germany.  E-mail: {\tt alex.leong@upb.de, dquevedo@ieee.org.} 
  S. Dey is with the Department of Electronic Engineering, Maynooth University, Maynooth, Ireland. E-mail: {\tt subhra.dey@mu.ie.}, and also with Uppsala University, Sweden. }
}

\maketitle

\begin{abstract}
This paper considers a game-theoretic formulation of the covert communications problem with finite blocklength, where the transmitter (Alice) can randomly vary her transmit power in different blocks, while the warden (Willie) can randomly vary his detection threshold in different blocks. In this two player game, the payoff for Alice is a combination of the coding rate to the receiver (Bob) and the detection error probability at Willie, while the payoff for Willie is the negative of his detection error probability. Nash equilibrium solutions to the game are obtained, and shown to be efficiently computable using linear programming. For less covert requirements, our game-theoretic approach can achieve significantly higher coding rates than uniformly distributed transmit powers. We then consider the situation with an additional jammer, where Alice and the jammer can both  vary their powers. We pose a two player game where Alice and the jammer jointly comprise one player, with Willie the other player.  The use of a jammer is shown in numerical simulations to lead to further significant performance improvements.

\end{abstract}

\section{Introduction}
In covert communications, a transmitter (Alice) transmits to a receiver (Bob) in the presence of a warden (Willie). The aim is for the transmission to be such that the very presence of a transmission or non-transmission is difficult for Willie to distinguish between \cite{BashGoeckelTowsley,Bash_magazine}. Applications of covert communication include the prevention of knowledge of transmission for use as metadata or to maintain privacy, communication in the presence of authoritarian governments, and military communications where detection of transmissions can reveal one's location to enemies \cite{SobersBash}.  

In \cite{BashGoeckelTowsley} it was shown that Alice can transmit $O(\sqrt{N})$ bits in $N$ channel uses covertly and reliably to Bob as $N \rightarrow \infty$. Covertness is defined in the sense that 
\begin{equation}
\label{covertness_defn_asymptotic}
\mathbb{P}_{FA} + \mathbb{P}_M \geq 1 - \epsilon \textnormal{ for any }\epsilon > 0,
\end{equation} 
with $\mathbb{P}_{FA}$ denoting the probability of false alarm and $\mathbb{P}_M$ the probability of missed detection. Further refinements of this result include \cite{CheBakshiJaggi,Bloch_covert,WangWornellZheng}.
Later, it was shown that in certain situations, it is possible to transmit $O(N)$ bits in $N$ channel uses as $N \rightarrow \infty$, such as when there is uncertainty in the receiver noise variance \cite{LeeBaxley}, or when there is an uninformed jammer \cite{SobersBash}. 

The above results are asymptotic in that the results apply for $N \rightarrow \infty$. The case of finite $N$ has been considered in \cite{Yan_delay_intolerant}, where expressions for $\mathbb{P}_{FA}$ and $\mathbb{P}_M$ were derived, and the use of uniformly distributed transmission powers was also proposed as a way to improve performance over the use of constant powers. The current paper also considers the case of finite $N$. Instead of uniformly distributed transmission powers, we instead wish to find the ``optimal'' distribution of transmit powers. Note that if Alice knows the detection threshold that Willie uses, then such an optimal distribution can be found. On the other hand, Willie himself could also try to randomly vary his detection threshold to confuse Alice and potentially improve his detection performance. Due to the competing objectives for Alice and Willie,  in this paper we will use game theory to model such interactions. We will formulate the situation as a two player nonzero sum/zero sum game, and show that Nash equilibrium solutions can be computed efficiently using linear programming. 

We then consider the case where there is also a jammer \cite{SobersBash}, where we now allow both the transmission and jamming powers to randomly vary. Here we formulate a two player game where Alice and the jammer jointly form one player, while Willie is the other player. We similarly show that Nash equilibria can be computed using linear programming. It should be noted that a recent work \cite{covert-games-cicc} has considered a power-threshold game in a {\em non-randomized setting without a jammer}, where Alice and Willie choose their power and  threshold respectively, in a deterministic fashion. Standard Nash equilibrium was derived in this case along with a Bayesian game formulation for the case where Willie's noise power is not known to Alice exactly, but only in distribution.

The paper is organized as follows. The system model is presented in Section \ref{model_sec}. The game-theoretic formulation is presented in Section \ref{game_theoretic_sec}. Section \ref{jammer_power_sec} extends the results to the case with an additional jammer.  Numerical studies and comparisons are given in Section \ref{numerical_sec}.  

\section{System Model}
\label{model_sec}

\begin{figure}[t!]
\centering 
\includegraphics[scale=0.4]{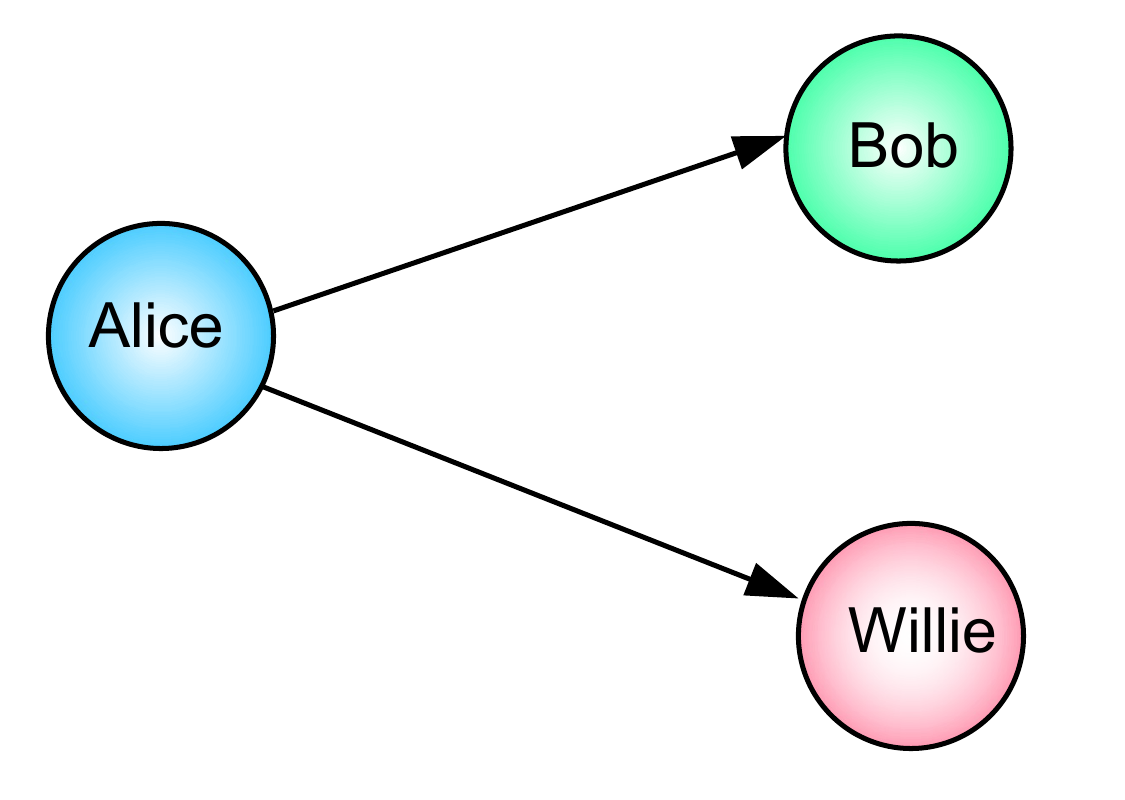} 
\caption{System model - Transmitter varying transmit power}
\label{system_model}
\end{figure} 
A diagram of the system model is shown in Fig. \ref{system_model}.
Let $x_k$ be the signal that is to be transmitted. 
The warden (Willie) wishes to decide between two hypotheses:
\begin{align*}
\mathcal{H}_0:  y_{w,k} &= n_{w,k}, & k = 1,\dots,N\\
\mathcal{H}_1:  y_{w,k} &= x_k + n_{w,k}, & k = 1,\dots,N
\end{align*} 
based on collecting $N$ observations, where $y_{w,k}$ is the received signal by Willie at time $k$, and $n_{w,k} \sim \mathcal{CN}(0,\sigma_w^2)$ is complex Gaussian channel noise. Hypothesis $\mathcal{H}_0$ means that the transmitter (Alice) did not transmit to the receiver (Bob), while hypothesis $\mathcal{H}_1$ means that Alice transmitted.  We assume that the coding blocklength is equal to $N$. The received signals at Bob under the different hypotheses are:
\begin{align*}
\mathcal{H}_0: y_{b,k} &= n_{b,k}, & k = 1,\dots,N\\
\mathcal{H}_1: y_{b,k} &= x_k + n_{b,k}, & k = 1,\dots,N
\end{align*} 
where $n_{b,k} \sim \mathcal{CN}(0,\sigma_b^2)$.

We assume Gaussian signalling such that $x_k \sim \mathcal{CN}(0,P)$. The transmit power $P$ varies between different blocks, but stays constant within each block of $N$ time slots. We assume that Bob knows the (random) values of $P$ used in each block via some shared secret between Alice and Bob, but that Willie only knows the distribution of $P$.\footnote{In game theoretic terminology this is equivalent to saying that Willie knows the mixed strategy that Alice will play.} 

Willie wants to detect transmissions of Alice. Optimal detection at Willie usually takes on the form of a likelihood ratio test \cite{Kay_detection,Poor_book}. Given $\mathcal{H}_0$, we have $y_{w,k} \sim \mathcal{CN} (0,\sigma_w^2)$, and given $\mathcal{H}_1$, we have $y_{w,k} \sim \mathcal{CN} (0,P + \sigma_w^2)$ for $k=1,\dots,N$. Then the likelihood ratio test can be easily shown to be equivalent to an energy detector which decides $\mathcal{H}_1$ if 
\begin{equation}
\label{T_defn}
T \triangleq \frac{1}{N} \sum_{k=1}^N |y_{w,k}|^2
\end{equation}
 exceeds a threshold $t$, and  decides $\mathcal{H}_0$ otherwise \cite{Kay_detection}. 

In covert communications, Alice wants to transmit to Bob while ensuring that the probability of being detected at Willie is sufficiently low \cite{Bash_magazine}.  One strategy for Alice to improve her performance (e.g. in terms of the transmission rate to Bob, or the detection probability at Willie) is by randomizing between a few different transmission powers, with the aim of confusing Willie. 
In \cite{Yan_delay_intolerant} the case of uniformly distributed $P$ was considered and shown to outperform the use of constant $P$. For the current paper we consider the problem of optimizing the distribution for $P$. Suppose that $P>0$ can take on a finite number of values 
$$P_1, P_2, \dots, P_I,$$
and denote
$$\pi_i^P \triangleq \mathbb{P}(P = P_i), \quad i = 1,\dots,I.$$

Now if Willie uses a fixed detection threshold $t$, then Alice can optimize her transmission power distribution for that particular threshold.\footnote{For instance, one can pose a problem of maximizing the transmission rate to Bob while constraining the detection error probability for Willie.} However, if Willie decides to randomize his detection threshold, he in turn could confuse Alice and possibly increase his detection performance. Due to the competing objectives of Alice and Willie, in this paper we will adopt a game-theoretic formulation of the situation, which will be presented in Section \ref{game_theoretic_sec}. 
We thus assume that $t$ can take on values
$$t_1, t_2, \dots, t_M$$
with 
$$\pi_m^t \triangleq \mathbb{P}(t = t_m), \quad m = 1, \dots, M.$$
The case where $t$ can take on a continuum of values can be approximated by discretization of the real interval using a large number of discretization points.

The statistic $T$ defined in (\ref{T_defn}) is equivalent to a scaled chi-squared distributed random variable with $2N$ degrees of freedom under both hypotheses, with scaling $\frac{\sigma_w^2}{2N}$ under $\mathcal{H}_0$, and scaling $\frac{P+\sigma_w^2}{2N}$ under $\mathcal{H}_1$ and transmit power $P$.  The likelihood functions of $T$ under $\mathcal{H}_0$ and $\mathcal{H}_1$ are then
\begin{align*}
f(T|\mathcal{H}_0) & = \frac{T^{N-1}}{\Gamma(N)} \left(\frac{N}{\sigma_w^2} \right)^N \exp\left(- \frac{NT}{\sigma_w^2}\right) \\
f(T|\mathcal{H}_1) & = \frac{T^{N-1}}{\Gamma(N)} \sum_{i=1}^I \left(\frac{N}{P_i+\sigma_w^2} \right)^N \exp\left(- \frac{NT}{P_i+\sigma_w^2}\right) \pi_i^P
\end{align*}
where $\Gamma(.)$ is the gamma function. 
Let $\mathbb{P}_{FA} = \mathbb{P} (\textnormal{decide } \mathcal{H}_1 | \mathcal{H}_0)$ and $\mathbb{P}_M = \mathbb{P} (\textnormal{decide } \mathcal{H}_0 | \mathcal{H}_1)$ denote the probability of false alarm and probability of missed detection respectively. We will say that the  communication scheme is covert  \cite{Yan_delay_intolerant} if\footnote{As we are considering finite blocklengths, we do not consider arbitrarily small $\epsilon$ in the sense of (\ref{covertness_defn_asymptotic}).}  
$$\mathbb{P}_{FA} + \mathbb{P}_M \geq 1 - \epsilon \textnormal{ for some } \epsilon > 0.$$

From the relation
\begin{equation}
\label{integral_relation}
\int T^{N-1} \exp\left( - \frac{NT}{x}\right) dT= - \left( \frac{N}{x}\right)^{-N} \Gamma\Big(N,\frac{NT}{x}\Big) 
\end{equation}
where $$\Gamma(s,x) = \int_x^\infty t^{s-1} e^{-t} dt$$ is the incomplete gamma function, one can easily show that for given distributions of transmit powers $\pi^P \triangleq (\pi^P_1,\dots,\pi^P_I)$ and detection thresholds $\pi^t \triangleq (\pi^t_1,\dots,\pi^t_M)$, the probabilities of false alarm and missed detection are 
\begin{equation*}
\begin{split}
\mathbb{P}_{FA} (\pi^P,\pi^t) & = \mathbb{P} (T > t |\pi^P,\pi^t, \mathcal{H}_0) = \sum_{m=1}^M \frac{\Gamma(N,\frac{N t_m}{\sigma_w^2})}{\Gamma(N)} \pi_m^t \\
\mathbb{P}_{M}  (\pi^P,\pi^t) & =\mathbb{P} (T < t | \pi^P,\pi^t, \mathcal{H}_1) \\ &= \sum_{m=1}^M \sum_{i=1}^I \left[1 - \frac{\Gamma(N,\frac{N t_m}{P_i+\sigma_w^2})}{\Gamma(N)} \right] \pi_i^P \pi_m^t.
\end{split}
\end{equation*}
Note that the expression for $\mathbb{P}_{FA}(\pi^P,\pi^t)$ does not actually depend on $\pi^P$, but for notational consistency with Section \ref{jammer_power_sec} we will use  $\mathbb{P}_{FA}(\pi^P,\pi^t)$ rather than  $\mathbb{P}_{FA}(\pi^t)$.

\section{Game-theoretic formulation}
\label{game_theoretic_sec}
For finite blocklengths, the channel coding rate from Alice to Bob in bits per channel use is approximated by (see \cite{Gursoy_ICC,PolyanskiyPoorVerdu})
\begin{equation}
\label{finite_blocklength_rate_approx}
 R \approx \log_2(1+\textnormal{SNR}_b) - \sqrt{\frac{1}{N} \left(1 - \frac{1}{(\textnormal{SNR}_b+1)^2} \right)} \frac{Q^{-1}(\delta)}{\ln(2)}
 \end{equation}
where $\textnormal{SNR}_b  $ is the signal-to-noise ratio at Bob, $Q^{-1}(.)$ is the inverse $Q$-function, and $\delta$ is the decoding error probability. For future reference, define the function $\overline{R}(.)$ by
\begin{equation}
\label{R_bar_defn}
\overline{R}(x) \triangleq \log_2(1+x) - \sqrt{\frac{1}{N} \left(1 - \frac{1}{(x+1)^2} \right)} \frac{Q^{-1}(\delta)}{\ln(2)}
\end{equation}

In this section we consider posing the situation in Section \ref{model_sec} as a two player game between Alice and Willie, where we wish to find Nash equilibrium solutions to the game. It is well known that for finite games, mixed strategy Nash equilibria always exist.
Here the mixed strategies for  Alice and Willie are $\pi^P$ and $\pi^t$ respectively. 

For transmit power $P$, the signal-to-noise ratio at Bob is $\textnormal{SNR}_b = \frac{P}{\sigma_b^2} $. Alice wants to maximize the payoff
\begin{equation}
\label{payoff_Alice}
\sum_{i=1}^I \overline{R}\left(\frac{P_i}{\sigma_b^2}\right) \pi_i^P + \beta (\mathbb{P}_{FA} (\pi^P,\pi^t)+\mathbb{P}_{M} (\pi^P,\pi^t))
\end{equation}
where $\overline{R}(.)$ is defined by (\ref{R_bar_defn}) and the parameter $\beta> 0$ controls the tradeoff between the (approximate) expected channel coding rate at Bob and covertness at Willie. Smaller values of $\beta$ will place more emphasis on achieving a large coding rate, while larger values of $\beta$ will have more emphasis on achieving higher detection error probabilities (i.e. be more covert).
 Willie on the other hand wants to minimize $\mathbb{P}_{FA} (\pi^P,\pi^t)+\mathbb{P}_{M} (\pi^P,\pi^t)$, so he has payoff
\begin{equation}
\label{payoff_Willie_non_zero_sum}
 - (\mathbb{P}_{FA} (\pi^P,\pi^t)+\mathbb{P}_{M} (\pi^P,\pi^t)).
 \end{equation}
This game with payoffs (\ref{payoff_Alice}) and (\ref{payoff_Willie_non_zero_sum}) for Alice and Bob respectively  is a non-zero-sum game. Nash equilibria to general non-zero-sum games can be found numerically using algorithms such as the Lemke-Howson algorithm \cite{LemkeHowson,NisanRoughgarden}. 

An alternative zero-sum game can also be posed, where Alice has payoff (\ref{payoff_Alice})
and Willie has payoff 
\begin{equation}
\label{payoff_Willie_zero_sum}
- \sum_{i=1}^I \overline{R}\left(\frac{P_i}{\sigma_b^2}\right) \pi_i^P - \beta (\mathbb{P}_{FA} (\pi^P,\pi^t)+\mathbb{P}_{M} (\pi^P,\pi^t)).
\end{equation}
The payoff for Willie can be motivated by saying that in addition to wanting to minimize $\mathbb{P}_{FA} (\pi^P,\pi^t)+\mathbb{P}_{M} (\pi^P,\pi^t)$, Willie also prefers Alice to achieve a lower rate. However, it turns out that the Nash equilibria for both the non-zero-sum and zero-sum games are the same. 

\begin{theorem}
\label{non_zero_sum_lemma}
The non-zero-sum game with payoffs (\ref{payoff_Alice}) and (\ref{payoff_Willie_non_zero_sum}), and the zero-sum game with payoffs (\ref{payoff_Alice}) and (\ref{payoff_Willie_zero_sum}), have the same Nash equilibria. 
\end{theorem}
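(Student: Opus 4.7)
My plan is to argue that the two games induce the same best-response correspondence for each player, and therefore the same set of fixed points of the joint best-response map, which is exactly the set of Nash equilibria.

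First I would observe that Alice's payoff (\ref{payoff_Alice}) is literally the same functional of $(\pi^P,\pi^t)$ in both games. Consequently, for any fixed $\pi^t$, the set of mixed strategies $\pi^P$ maximizing Alice's payoff is identical in the two games, so her best-response correspondence $\pi^t \mapsto \mathrm{BR}_A(\pi^t)$ is the same.

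Next I would analyze Willie's best response for fixed $\pi^P$. In the zero-sum game, his payoff (\ref{payoff_Willie_zero_sum}) decomposes as a term $-\sum_{i=1}^I \overline{R}(P_i/\sigma_b^2)\,\pi_i^P$ that does not depend on $\pi^t$ plus the term $-\beta(\mathbb{P}_{FA}(\pi^P,\pi^t)+\mathbb{P}_M(\pi^P,\pi^t))$. Since the additive constant is immaterial to the arg-max and $\beta>0$ is a positive scaling, the set of $\pi^t$ maximizing (\ref{payoff_Willie_zero_sum}) is exactly the set of $\pi^t$ maximizing $-(\mathbb{P}_{FA}(\pi^P,\pi^t)+\mathbb{P}_M(\pi^P,\pi^t))$, which is Willie's payoff (\ref{payoff_Willie_non_zero_sum}) in the non-zero-sum game. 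Hence Willie's best-response correspondence $\pi^P \mapsto \mathrm{BR}_W(\pi^P)$ is also identical in the two games.

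Because Nash equilibria are precisely the pairs $(\pi^{P*},\pi^{t*})$ satisfying $\pi^{P*}\in \mathrm{BR}_A(\pi^{t*})$ and $\pi^{t*}\in \mathrm{BR}_W(\pi^{P*})$, and both correspondences coincide across the two games, the equilibrium sets coincide. There is no real obstacle here beyond cleanly stating the invariance of each best-response set under (i) an identical objective (for Alice) and (ii) addition of a $\pi^t$-independent constant plus multiplication by the positive scalar $\beta$ (for Willie); no continuity or existence arguments are required since the equality is pointwise on the strategy spaces.
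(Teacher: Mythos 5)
Your proposal is correct and follows essentially the same argument as the paper: Alice's payoff is identical in both games, and Willie's zero-sum payoff differs from his non-zero-sum payoff only by a term independent of $\pi^t$ (and the positive factor $\beta$), so neither player's optimal responses change. Phrasing this via identical best-response correspondences is just a cleaner packaging that handles both inclusions at once, where the paper argues one direction explicitly and invokes symmetry for the other.
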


\begin{proof}
Let $(\pi^{P*},\pi^{t*})$ be a Nash equilibrium to the non-zero-sum game with payoffs (\ref{payoff_Alice}) and (\ref{payoff_Willie_non_zero_sum}). For fixed $\pi^{t*}$, as the payoff (\ref{payoff_Alice}) for Alice is the same in both games, there is no incentive for Alice to deviate from $\pi^{P*}$ in the zero-sum game. While for fixed  $\pi^{P*}$, as $- \sum_{i=1}^I \overline{R}\left(\frac{P_i}{\sigma_b^2}\right) \pi_i^P$ does not depend on $\pi^t$, optimizing (\ref{payoff_Willie_zero_sum}) over $\pi^t$ is equivalent to optimizing (\ref{payoff_Willie_non_zero_sum}), and thus there is no incentive for Willie to deviate from $\pi^{t*}$ in the zero-sum game. Hence $(\pi^{P*},\pi^{t*})$  is also a Nash equilibrium to the zero-sum game with payoffs (\ref{payoff_Alice}) and (\ref{payoff_Willie_zero_sum}). 

A similar argument can be used to show that Nash equilibria to the zero-sum game are also Nash equilibria to the non-zero-sum game. 
\end{proof}

One of the advantages of zero-sum games is that they can be solved efficiently using linear programming \cite{NisanRoughgarden} (note that the Lemke-Howson algorithm itself is similar to the simplex algorithm). A Nash equilibrium  mixed strategy for Alice can be found by solving the linear program:
\begin{align}
\max_{\{\pi_i^P\}, U} \,&  U \nonumber \\
\textnormal{s.t. } & \sum_{i=1}^I \bigg[\overline{R}\bigg(\frac{P_i}{\sigma_b^2}\bigg) \nonumber \\ & + \beta\bigg(\frac{\Gamma(N,\frac{N t_m}{\sigma_w^2})}{\Gamma(N)}  + 1 - \frac{\Gamma(N,\frac{N t_m}{P_i+\sigma_w^2})}{\Gamma(N)}\bigg) \bigg] \pi_i^P \geq U, \nonumber \\ & \quad m = 1,\dots,M, \nonumber \\
& \sum_{i=1}^I \pi_i^P = 1, \quad 0 \leq \pi_i^P \leq 1, \label{lin_prog_Alice}
\end{align}
while a Nash equilibrium  mixed strategy for Willie can be found by solving the linear program:
\begin{align}
\min_{\{\pi_m^t\}, U} \,&  U \nonumber \\
\textnormal{s.t. } & \sum_{m=1}^M \bigg[ \overline{R}\bigg(\frac{P_i}{\sigma_b^2}\bigg) \nonumber \\ &  + \beta\bigg(\frac{\Gamma(N,\frac{N t_m}{\sigma_w^2})}{\Gamma(N)}  + 1 - \frac{\Gamma(N,\frac{N t_m}{P_i+\sigma_w^2})}{\Gamma(N)}\bigg) \bigg] \pi_m^t \leq U, \nonumber \\ & \quad i = 1,\dots,I, \nonumber \\
& \sum_{m=1}^M \pi_m^t = 1, \quad 0 \leq \pi_m^t \leq 1. \label{lin_prog_Willie}
\end{align}

Another advantage of zero-sum games is that their Nash equilibria  have nice ``uniqueness'' properties. We first give the following definition (see also \cite[p.233]{Binmore_book}): 
\begin{definition}
Two Nash equilibria $(\pi^P,\pi^t)$ and $(\pi^{P'}, \pi^{t'})$ are:
\\(i) interchangeable if $(\pi^P,\pi^{t'})$ and $(\pi^{P'}, \pi^{t})$ are also Nash equilibria 
\\(ii) equivalent if the payoffs from using the mixed strategy $(\pi^P,\pi^t)$ are the same as the payoffs from using the mixed strategy $(\pi^{P'}, \pi^{t'})$.
\end{definition}
The following is a standard result in game theory, see e.g. \cite[p.232]{Binmore_book} for a proof.
\begin{theorem}
\label{zero_sum_game_thm}
All Nash equilibria in zero-sum games are interchangeable and equivalent.
\end{theorem}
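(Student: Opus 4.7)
The plan is to reduce everything to the minimax theorem for finite zero-sum games and then finish with a short exchange argument. Since Theorem \ref{non_zero_sum_lemma} lets me work with the zero-sum version of the game, I would denote Alice's payoff in (\ref{payoff_Alice}) by $u(\pi^P, \pi^t)$, so that Willie's payoff is $-u(\pi^P, \pi^t)$. Von Neumann's minimax theorem then guarantees a common value $v$ of the game with $v = \max_{\pi^P}\min_{\pi^t} u(\pi^P, \pi^t) = \min_{\pi^t}\max_{\pi^P} u(\pi^P, \pi^t)$, and the best-response conditions at any Nash equilibrium $(\pi^{P*}, \pi^{t*})$ force $u(\pi^{P*}, \pi^{t*}) = v$. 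Applied to any two equilibria, this already gives equivalence: both yield the same pair of payoffs $(v,-v)$.

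For interchangeability, I would take two arbitrary equilibria $(\pi^P, \pi^t)$ and $(\pi^{P'}, \pi^{t'})$ and consider the mixed pair $(\pi^P, \pi^{t'})$. Alice's best-response condition at $(\pi^{P'}, \pi^{t'})$ yields $u(\pi^P, \pi^{t'}) \leq u(\pi^{P'}, \pi^{t'}) = v$, while Willie's best-response condition at $(\pi^P, \pi^t)$ yields $u(\pi^P, \pi^{t'}) \geq u(\pi^P, \pi^t) = v$; hence $u(\pi^P, \pi^{t'}) = v$. The same two inequalities, taken over arbitrary alternative strategies $\tilde\pi^P$ and $\tilde\pi^t$, show that no unilateral deviation from $(\pi^P, \pi^{t'})$ strictly improves either player's payoff, so $(\pi^P, \pi^{t'})$ is a Nash equilibrium. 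The argument for $(\pi^{P'}, \pi^t)$ is symmetric.

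The only nontrivial ingredient is the minimax theorem itself; once $u(\pi^{P*}, \pi^{t*}) = v$ is known to hold at every equilibrium, the rest is bookkeeping with the best-response inequalities. In the setting at hand the strategy sets are the simplices over $\{P_1,\dots,P_I\}$ and $\{t_1,\dots,t_M\}$ and $u$ is bilinear, so the minimax theorem applies and can moreover be extracted directly from LP duality applied to (\ref{lin_prog_Alice}) and (\ref{lin_prog_Willie}), keeping the argument self-contained within the linear-programming framework already used in the paper.
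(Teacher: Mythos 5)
Your proof is correct, but note that the paper does not prove this theorem at all: it is quoted as a standard result with a pointer to Binmore's textbook. What you have written is essentially the classical textbook argument (value of the game plus a best-response exchange), and all the steps check out: at any equilibrium $(\pi^{P*},\pi^{t*})$ the two best-response conditions sandwich $u(\pi^{P*},\pi^{t*})$ between $\max_{\pi^P}\min_{\pi^t}u$ and $\min_{\pi^t}\max_{\pi^P}u$, giving the common value $v$ and hence equivalence, and your two inequalities $u(\pi^P,\pi^{t'})\leq u(\pi^{P'},\pi^{t'})=v$ and $u(\pi^P,\pi^{t'})\geq u(\pi^P,\pi^{t})=v$, applied against arbitrary deviations, do establish interchangeability. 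One refinement worth making: the minimax theorem (and hence LP duality, finiteness, bilinearity) is not actually needed. From Willie's best response at $(\pi^{P'},\pi^{t'})$ you get $u(\pi^{P'},\pi^{t})\geq u(\pi^{P'},\pi^{t'})$, and from Alice's best response at $(\pi^{P},\pi^{t})$ you get $u(\pi^{P'},\pi^{t})\leq u(\pi^{P},\pi^{t})$; combining with the symmetric chain shows directly that any two equilibria have the same payoff, after which your exchange argument goes through verbatim. This leaner version matches the full generality of the statement (``all Nash equilibria in zero-sum games''), whereas your appeal to von Neumann's theorem silently restricts the claim to games where that theorem applies, such as the finite mixed-extension games considered in the paper.
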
 

We have shown in Theorem \ref{non_zero_sum_lemma} that our original game with payoffs (\ref{payoff_Alice}) and (\ref{payoff_Willie_non_zero_sum}) has the same Nash equilibria as the zero-sum game with payoffs  (\ref{payoff_Alice}) and (\ref{payoff_Willie_zero_sum}). A Nash equilibrium to this zero-sum game can be found by solving the linear programs (\ref{lin_prog_Alice})-(\ref{lin_prog_Willie}). By Theorem \ref{zero_sum_game_thm}, this Nash equilibrium has performance as good any other  Nash equilibria of the game. 
Hence there is no loss of performance in using the mixed strategies obtained by solving the linear programs (\ref{lin_prog_Alice})-(\ref{lin_prog_Willie}).

\section{Presence of a Cooperative Jammer}
\label{jammer_power_sec}
In this section we extend our setup to the situation where there is also a jammer \cite{SobersBash}, which generates a  jamming signal to enhance covertness. It is known \cite{SobersBash} that by using a jammer with jamming powers unknown to Willie, the transmit powers of Alice do not need to go to zero (as the blocklength increases) in order to remain covert in the sense of \cite{BashGoeckelTowsley}. In this paper, we will consider the scenario where Alice and the jammer cooperate by   optimizing of the joint distribution of transmit and jamming powers.

\subsection{System Model}
\begin{figure}[t!]
\centering 
\includegraphics[scale=0.4]{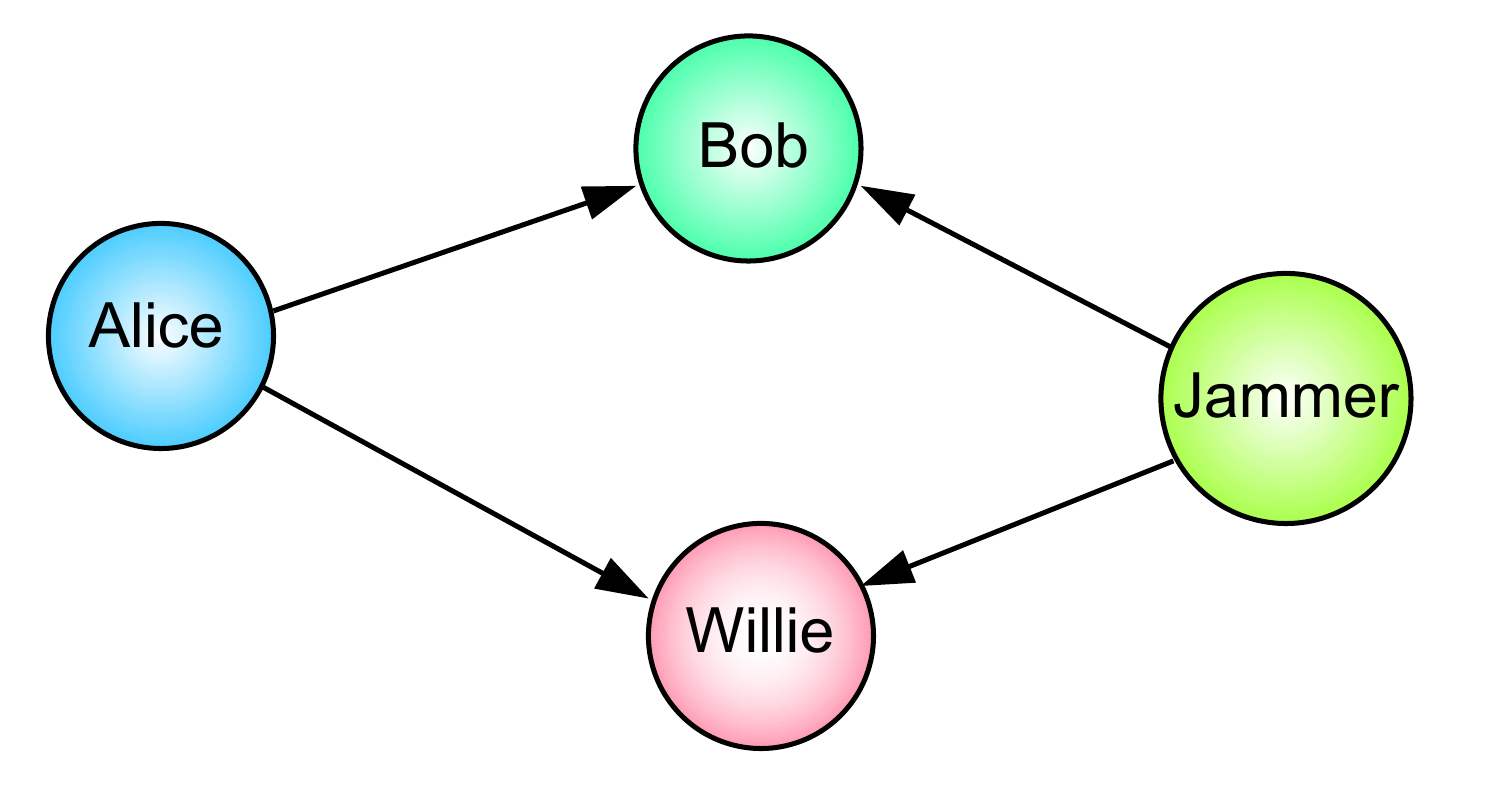} 
\caption{System model - Presence of a jammer}
\label{system_model_jammer}
\end{figure}
A diagram of the system model for this setup is shown in Fig. \ref{system_model_jammer}.
Let $x_k$  again denote the signal which is to be transmitted. 
Willie now wishes to decide between the two hypotheses:
\begin{align*}
\mathcal{H}_0: y_{w,k} &= n_{w,k} + j_k, & k = 1,\dots,N\\
\mathcal{H}_1:  y_{w,k} &= x_k + n_{w,k} + j_k, & k = 1,\dots,N
\end{align*} 
where $j_k \sim \mathcal{CN}(0, J)$ is the random jamming signal. The jamming signal power $J\geq 0$ varies randomly between different blocks, but stays constant within each block of $N$ time slots. 
As before, we assume Gaussian signalling such that $x_k \sim \mathcal{CN}(0,P)$, with $P$ varying randomly between blocks.

The received signals at Bob under the two hypotheses are:
\begin{align*}
\mathcal{H}_0: & \quad y_{b,k} = n_{b,k} + \alpha j_k, & k = 1,\dots,N\\
\mathcal{H}_1: & \quad y_{b,k} = x_k + n_{b,k} + \alpha j_k, & k = 1,\dots,N
\end{align*} 
where $\alpha > 0$ can be used to model different distances between the jammer and Bob, and between the jammer and Willie. We assume a  cooperative jammer such that the transmit powers $P$ and jamming powers $J$ used in each block are known to Bob but unknown to  Willie. The actual values of the random jamming signal $j_k$ are unknown to either Bob or Willie.

We suppose that $P>0$ can take on  values
$$P_1, P_2, \dots, P_I,$$
while $J\geq 0$ can take on values 
$$J_1, J_2, \dots, J_{L}.$$
The joint probabilities of transmit and jamming powers are denoted by
$$\pi_{i,l}^{P,J} \triangleq \mathbb{P}(P=P_i,J = J_l), \quad i = 1,\dots,I, \,l=1,\dots,L.$$
The detection thresholds 
$t$ can take on values
$$t_1, \dots, t_M$$
with 
$$\pi_m^t \triangleq \mathbb{P}(t = t_m), \quad m = 1, \dots, M.$$

The likelihood functions of $T = \frac{1}{N} \sum_{k=1}^N |y_{w,k}|^2$ under $\mathcal{H}_0$ and $\mathcal{H}_1$ are now
\begin{align*}
f(T|\mathcal{H}_0) & = \frac{T^{N-1}}{\Gamma(N)} \sum_{i=1}^{I} \sum_{l=1}^L \left(\frac{N}{\sigma_w^2+J_l} \right)^N \exp\left(- \frac{NT}{\sigma_w^2+J_l}\right) \pi_{i,l}^{P,J}\\
f(T|\mathcal{H}_1) & = \frac{T^{N-1}}{\Gamma(N)} \sum_{i=1}^{I}  \sum_{l=1}^L \left(\frac{N}{P_i+\sigma_w^2+J_l} \right)^N \\ & \quad \quad \times \exp\left(- \frac{NT}{P_i+\sigma_w^2+J_l}\right)  \pi_{i,l}^{P,J}.
\end{align*}

Using again the relation (\ref{integral_relation}), one can now show that for given $\pi^{P,J} \triangleq \{\pi^{P,J}_{i,l}: i=1,\dots,I,l=1,\dots,L\}$ and $\pi^t$,
\begin{equation*}
\begin{split}
\mathbb{P}_{FA} (\pi^{P,J},\pi^t) & = \mathbb{P} (T > t |\pi^{P,J},\pi^t, \mathcal{H}_0) \\ &= \sum_{m=1}^M  \sum_{i=1}^{I} \sum_{l=1}^L  \frac{\Gamma(N,\frac{N t_m}{\sigma_w^2+ J_l})}{\Gamma(N)} \pi_{i,l}^{P,J} \pi_m^t \\
\mathbb{P}_{M}  (\pi^{P,J},\pi^t) & =\mathbb{P} (T < t |\pi^{P,J}, \pi^t, \mathcal{H}_1) \\ &= \sum_{m=1}^M  \sum_{i=1}^{I} \sum_{l=1}^L \left[1 - \frac{\Gamma(N,\frac{N t_m}{P_i+\sigma_w^2+ J_l})}{\Gamma(N)} \right] \pi_{i,l}^{P,J} \pi_m^t.
\end{split}
\end{equation*}

\subsection{Game-Theoretic Formulation} 
Given transmit power $P$ and jamming power $J$, the signal-to-noise ratio at Bob is now
$\textnormal{SNR}_b = \frac{P}{\sigma_b^2 + \alpha^2 J}$.
We will formulate a two player game where the players are 1) Alice-jammer (Alice and the jammer jointly regarded as a single player), and 2) Willie, with mixed strategies $\pi^{P,J}$ and $\pi^t$ respectively. 
Alice-jammer  wants to jointly maximize the payoff
\begin{equation}
\label{payoff_jammer}
\begin{split}
& \sum_{i=1}^{I}  \sum_{l=1}^L  \overline{R} \left( \frac{P_i}{\sigma_b^2 + \alpha^2 J_l} \right) \pi_{i,l}^{P,J} \\ & \quad + \beta' (\mathbb{P}_{FA} (\pi^{P,J},\pi^t)+\mathbb{P}_{M} (\pi^{P,J},\pi^t)),
\end{split}
\end{equation}
where $\overline{R}(.)$ is defined in (\ref{R_bar_defn}) and $\beta' > 0$ controls the tradeoff between the coding rate at Bob and covertness at Willie. 
 Willie on the other hand wants to minimize $\mathbb{P}_{FA} (\pi^{P,J},\pi^t)+\mathbb{P}_{M} (\pi^{P,J},\pi^t)$, so he has payoff
\begin{equation}
\label{payoff_Willie_non_zero_sum_jammer}
 - (\mathbb{P}_{FA} (\pi^{P,J},\pi^t)+\mathbb{P}_{M} (\pi^{P,J},\pi^t)).
 \end{equation}

An alternative zero-sum game can  be posed, where Alice-jammer has payoff 
(\ref{payoff_jammer})
and Willie has a payoff which is the negative of (\ref{payoff_jammer}).

\begin{theorem}
\label{non_zero_sum_lemma2}
The non-zero-sum game with payoffs (\ref{payoff_jammer}) and (\ref{payoff_Willie_non_zero_sum_jammer}), and the zero-sum game with payoffs (\ref{payoff_jammer}) and the negative of  (\ref{payoff_jammer}), have the same Nash equilibria. 
\end{theorem}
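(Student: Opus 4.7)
The plan is to mirror, almost verbatim, the argument used for Theorem \ref{non_zero_sum_lemma}, with the Alice--jammer coalition playing the role previously occupied by Alice alone, and the joint mixed strategy $\pi^{P,J}$ replacing $\pi^P$. The structural fact that makes the earlier proof work carries over without change: Willie's zero-sum payoff differs from his non-zero-sum payoff (\ref{payoff_Willie_non_zero_sum_jammer}) only by the additive term
\[
- \sum_{i=1}^{I} \sum_{l=1}^{L} \overline{R}\!\left(\frac{P_i}{\sigma_b^2 + \alpha^2 J_l}\right) \pi_{i,l}^{P,J},
\]
which depends solely on $\pi^{P,J}$ and is constant in $\pi^t$. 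Hence, with $\pi^{P,J}$ fixed, Willie's best-response set is identical in the two games.

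First I would take an arbitrary Nash equilibrium $(\pi^{P,J*},\pi^{t*})$ of the non-zero-sum game. For the Alice--jammer player, the payoff (\ref{payoff_jammer}) is by construction the same function in both games, so holding $\pi^{t*}$ fixed there is no profitable unilateral deviation in the zero-sum game. For Willie, holding $\pi^{P,J*}$ fixed, maximizing the zero-sum payoff over $\pi^t$ differs from maximizing (\ref{payoff_Willie_non_zero_sum_jammer}) only by a quantity independent of $\pi^t$; since $\pi^{t*}$ maximizes the latter, it also maximizes the former. This shows $(\pi^{P,J*},\pi^{t*})$ is a Nash equilibrium of the zero-sum game.

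The reverse inclusion is symmetric. Starting with a Nash equilibrium of the zero-sum game, the Alice--jammer side is again immediate since the payoffs coincide, and Willie's best-response set is unchanged by the same constant-in-$\pi^t$ observation, so he has no incentive to deviate in the non-zero-sum game either.

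I do not expect a genuine obstacle here, since the proof is essentially a direct transcription of the proof of Theorem \ref{non_zero_sum_lemma}. The only point that warrants a brief explicit check is that the separating additive term is genuinely independent of $\pi^t$, which is manifest from the definition of the payoff (\ref{payoff_jammer}) and the fact that $\overline{R}(\cdot)$, the $P_i$'s, the $J_l$'s, $\sigma_b^2$ and $\alpha$ contain no dependence on the threshold distribution.
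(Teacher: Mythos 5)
Your proposal is correct and is exactly the paper's intended argument: the paper's own proof of Theorem~\ref{non_zero_sum_lemma2} simply states that it is ``similar to the proof of Theorem~\ref{non_zero_sum_lemma}'', which is precisely the transcription you carry out with $\pi^{P,J}$ in place of $\pi^P$. One tiny imprecision: Willie's zero-sum payoff differs from (\ref{payoff_Willie_non_zero_sum_jammer}) not only by the additive $\overline{R}$-term but also by the positive factor $\beta'$ multiplying $\mathbb{P}_{FA}+\mathbb{P}_{M}$; since $\beta'>0$ this positive affine change leaves Willie's best-response set unchanged, so the conclusion stands.
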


\begin{proof}
Similar to the proof of Theorem \ref{non_zero_sum_lemma}.
\end{proof}

A Nash equilibrium mixed strategy for Alice-jammer can be found by solving the linear program:
\begin{align}
\label{Alice_jammer_LP}
&\max_{\{\pi_{i,l}^{P,J}\}, U} \,  U \nonumber \\
&\textnormal{s.t. }  \sum_{i=1}^{I} \sum_{l=1}^L \bigg[ \overline{R} \bigg( \frac{P_i}{\sigma_b^2 + \alpha^2 J_l} \bigg) \nonumber \\ & \quad + \beta'\bigg( \frac{\Gamma(N,\frac{N t_m}{\sigma_w^2+ J_l})}{\Gamma(N)}  + 1 - \frac{\Gamma(N,\frac{N t_m}{P_i+\sigma_w^2+ J_l})}{\Gamma(N)}\bigg) \bigg] \pi_{i,l}^{P,J} \geq U, \nonumber \\ &\quad \quad m = 1,\dots,M, \nonumber \\
& \quad\sum_{i=1}^{I} \sum_{l=1}^L \pi_{i,l}^{P,J}  = 1, \quad 0 \leq \pi_{i,l}^{P,J} \leq 1, 
\end{align}
while a Nash equilibrium mixed strategy for Willie can be found by solving the linear program:
\begin{align}
\label{jammer_Willie_LP}
&\min_{\{\pi_m^t\}, U} \,  U  \nonumber\\
&\textnormal{s.t. } \sum_{m=1}^M \bigg[ \overline{R} \bigg( \frac{P_i}{\sigma_b^2 + \alpha^2 J_l} \bigg)  \nonumber  \nonumber \\ & \quad + \beta\bigg( \frac{\Gamma(N,\frac{N t_m}{\sigma_w^2+ J_l})}{\Gamma(N)}  + 1 - \frac{\Gamma(N,\frac{N t_m}{P_i +\sigma_w^2 + J_l})}{\Gamma(N)}\bigg) \bigg] \pi_m^t \leq U, \nonumber \\ &   \quad\quad i = 1,\dots,I, \, l=1,\dots,L,  \nonumber \\
& \quad \sum_{m=1}^M \pi_m^t = 1, \quad 0 \leq \pi_m^t \leq 1.
\end{align}

Similar uniqueness properties of the Nash equilibria as discussed at the end of Section \ref{game_theoretic_sec} will also hold here. 

\begin{remark}
The linear program (\ref{Alice_jammer_LP}) is not quite in standard form, as the joint distribution $\pi^{P,J}$ is more conveniently viewed as a matrix than a vector. It can however be put into standard form by vectorizing $\pi^{P,J}$. For instance, let an index $y$ range from $1$ to $I \times L$, and consider the mappings 
\begin{align}
\label{i_l_mapping}
i(y) & \triangleq \left\{\begin{array}{ll}  I, & \textnormal{if } y \textnormal{ mod } I = 0 \\ y \textnormal{ mod } I, & \textnormal{otherwise} \end{array} \right. \nonumber \\
l(y) & \triangleq \left\lceil \frac{y}{I} \right\rceil,
\end{align}
where $\lceil . \rceil$ is the ceiling operator. 
Then the linear program (\ref{Alice_jammer_LP}) can be rewritten as:
\begin{align*}
&\max_{\{\pi_{y}\}, U} \,  U \nonumber \\
&\textnormal{s.t. }  \sum_{y=1}^{I L} \bigg[ \overline{R} \bigg( \frac{P_i}{\sigma_b^2 + \alpha^2 J_{l(y)}} \bigg) \nonumber \\ &\quad + \beta'\bigg( \frac{\Gamma(N,\frac{N t_m}{\sigma_w^2+ J_{l(y)}})}{\Gamma(N)}  + 1 - \frac{\Gamma(N,\frac{N t_m}{P_{i(y)}+\sigma_w^2+ J_{l(y)}})}{\Gamma(N)}\bigg) \bigg] \pi_y \geq U, \nonumber \\ &\quad\quad m = 1,\dots,M, \nonumber \\
& \quad \sum_{y=1}^{I L}  \pi_y  = 1, \quad 0 \leq  \pi_y \leq 1, 
\end{align*}
where $i(y)$ and $l(y)$ are replaced by the mappings (\ref{i_l_mapping}).
\end{remark}

\section{Numerical studies}
\label{numerical_sec}

\subsection{Plots of probability distributions}
We first show some plots of the Nash equilibrium mixed strategies / probability distributions. For the case with no jammer (Section \ref{game_theoretic_sec}), we use the following parameters: $\sigma_b^2 = 0 \textnormal{ dB}$, $\sigma_w^2 = 0  \textnormal{ dB}$, $\delta = 0.1$, $N=200$, $\beta = 1.6$. The transmit powers range from 0.01 mW to 1 mW in steps of 0.01 mW, and the detection thresholds are discretized from 0 to 3 in steps of 0.01. When solving the linear programs (\ref{lin_prog_Alice})-(\ref{lin_prog_Willie}), we omit values which give a negative rate in the expression (\ref{finite_blocklength_rate_approx}).\footnote{Using the above parameters, it turns out that we  omit the transmit power of 0.01 mW.}
Fig. \ref{P_dist_tx} shows the transmit power distribution and Fig. \ref{t_dist_tx} shows the threshold distribution. 
\begin{figure}[t!]
\centering 
\includegraphics[scale=0.5]{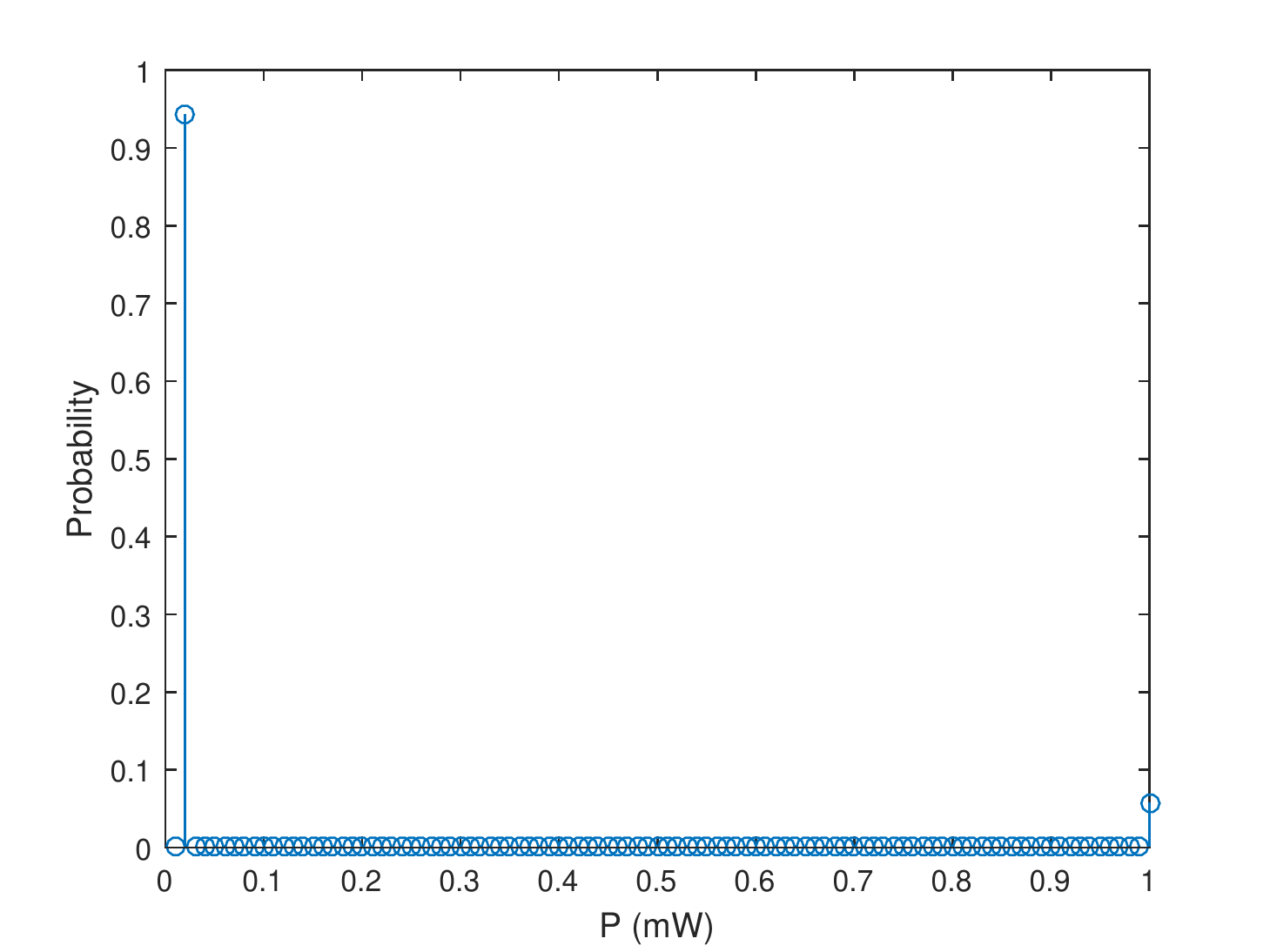} 
\caption{Transmit power distribution}
\label{P_dist_tx}
\end{figure} 
\begin{figure}[t!]
\centering 
\includegraphics[scale=0.5]{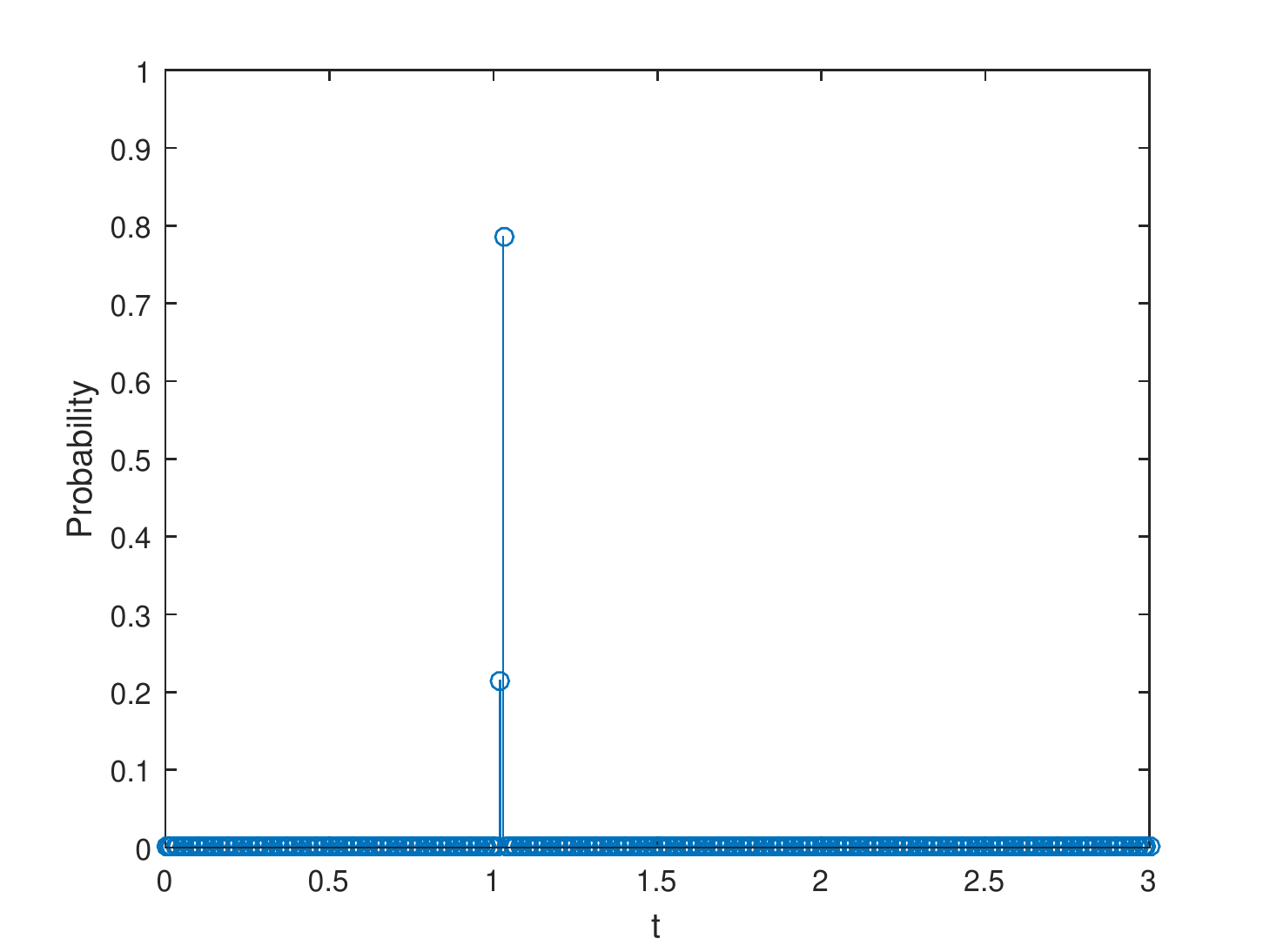} 
\caption{Detection threshold distribution}
\label{t_dist_tx}
\end{figure} 
The transmit powers here are concentrated on two values, randomizing between the lowest (0.02 mW) and highest (1 mW) power levels. The detection thresholds of Willie are randomized between the two neighbouring values 1.02 and 1.03.

In the case with a  jammer (Section \ref{jammer_power_sec}), we use the following parameters:  $\sigma_b^2 = 0 \textnormal{ dB}$, $\sigma_w^2 = 0 \textnormal{ dB}$, $\delta = 0.1$,  $\alpha = 1$, $N=200$, $\beta' = 1.5$. The transmit powers range from 0.01 mW to 1 mW in steps of 0.01 mW, the jamming powers range from 0 mW to 1 mW in steps of 0.01 mW, and the detection thresholds are discretized from 0 to 3 in steps of 0.01. When solving the linear programs (\ref{Alice_jammer_LP})-(\ref{jammer_Willie_LP}), we again omit values which give a negative rate in (\ref{finite_blocklength_rate_approx}). Fig. \ref{J_dist_jammer} shows the joint transmit and jamming power distribution and Fig. \ref{t_dist_jammer} shows the threshold distribution. 
\begin{figure}[t!]
\centering 
\includegraphics[scale=0.5]{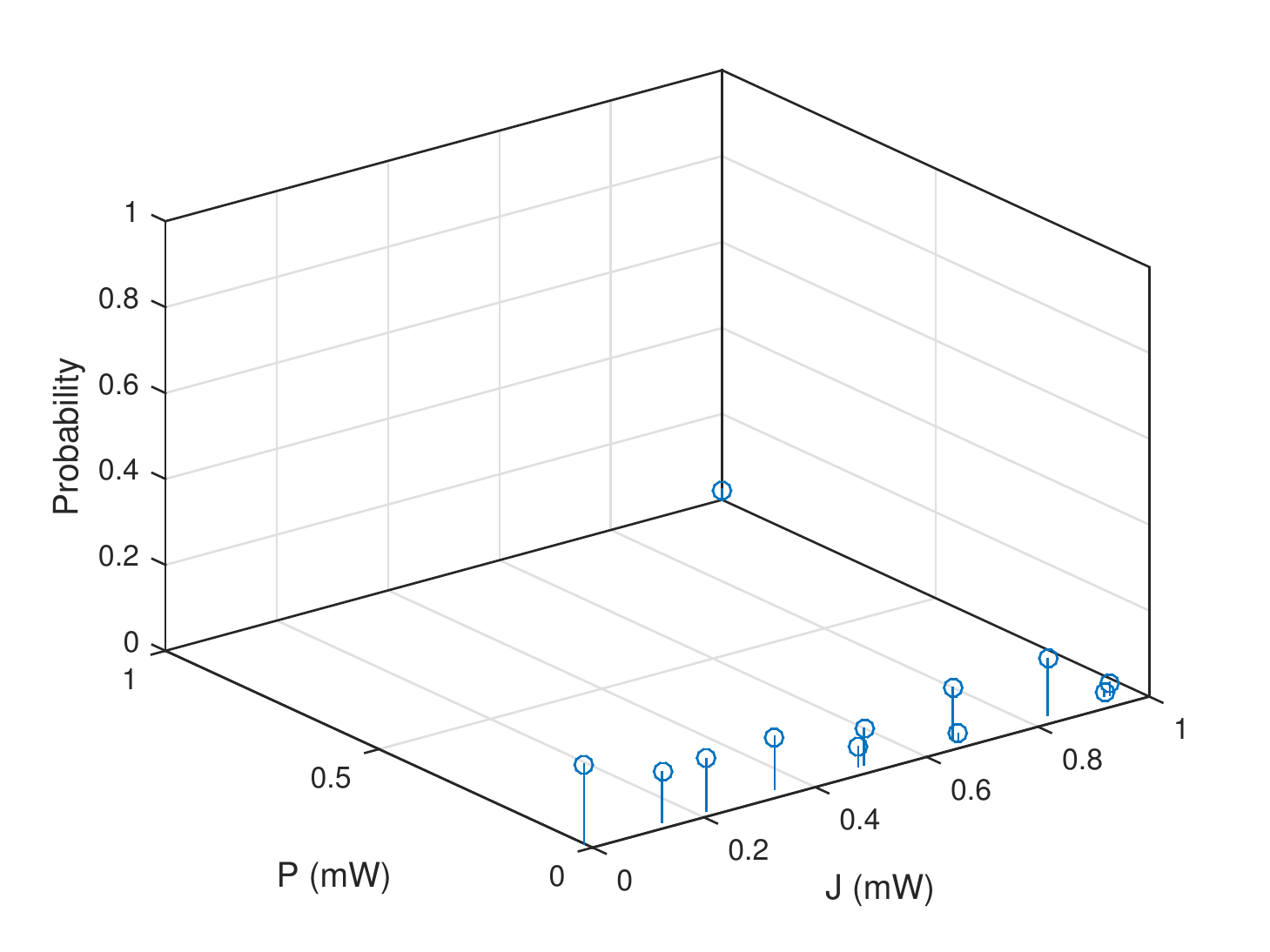} 
\caption{With jammer: Joint transmit and jamming power distribution}
\label{J_dist_jammer}
\end{figure} 
\begin{figure}[t!]
\centering 
\includegraphics[scale=0.5]{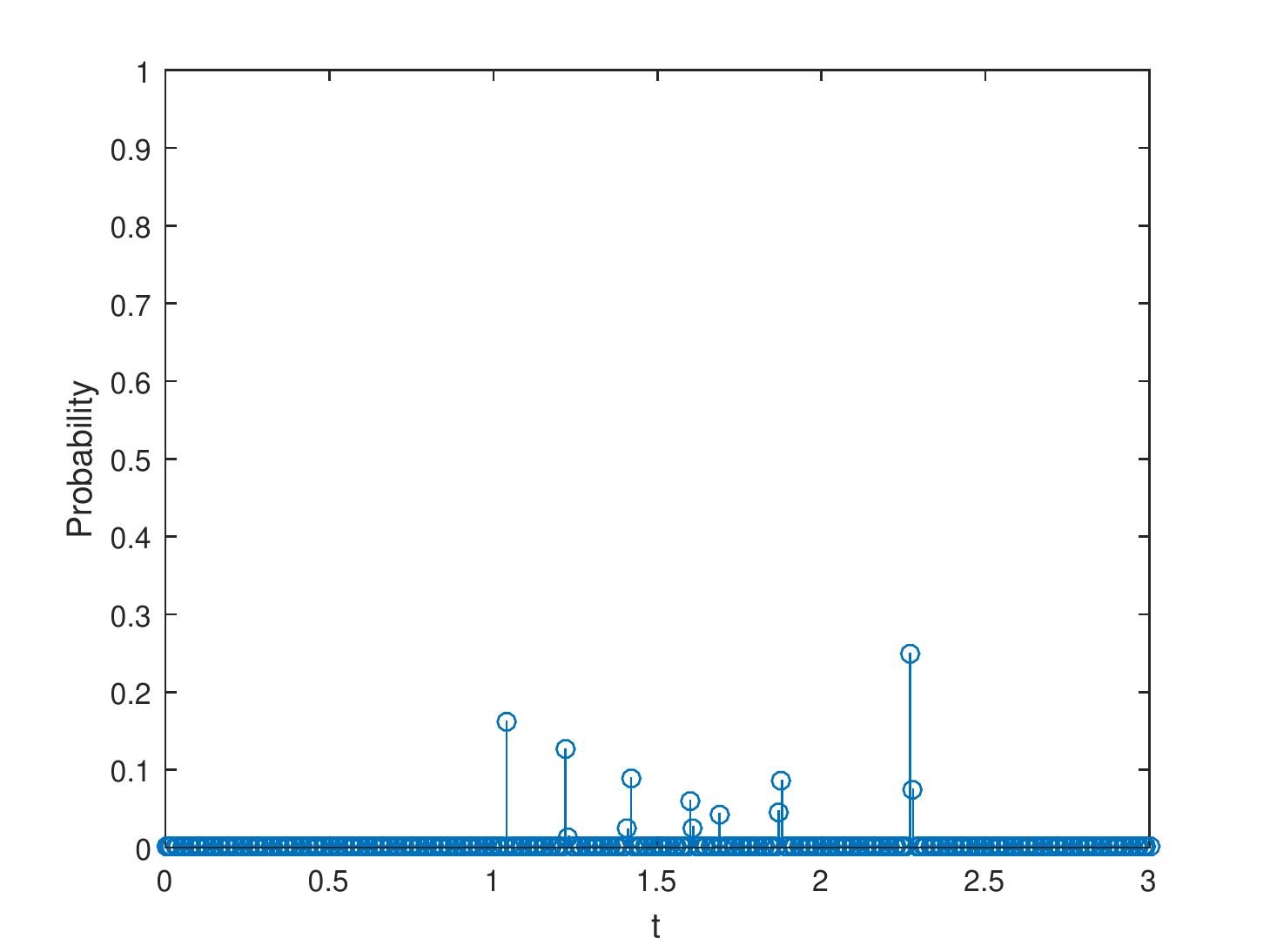} 
\caption{With jammer: Detection threshold distribution}
\label{t_dist_jammer}
\end{figure} 
The transmit-jamming powers and detection thresholds are now concentrated on multiple values. 

\subsection{Trade-off between rate and detection error probabilities}
Next we look at the trade-off between the expected coding rate per channel use and $\mathbb{P}_{FA}+\mathbb{P}_{M}$, by finding Nash equilibria for different values of $\beta$ and $\beta'$. 
In the case of the transmitter varying its transmit power, we use the following parameters: $\sigma_b^2 = 0 \textnormal{ dB}$, $\sigma_w^2 = 0 \textnormal{ dB}$, $\delta = 0.1$. 
In the case with a  jammer, we additionally set $\alpha = 1$. Fig. \ref{rate_error_plot} shows plots for various block lengths $N$. We see that in each case, the use of a jammer gives improvements in expected rate for the same covertness requirement.

Interestingly, for larger values of $\mathbb{P}_{FA}+\mathbb{P}_{M}$, when there is no jammer, the performance is not monotonic with $N$, but seems to be worse for both small and large values of $N$. For small $N$, this could be due to the finite blocklength correction in the second term of (\ref{finite_blocklength_rate_approx}), while the poorer performance for large $N$ is due to the fact that Willie can achieve better detection when he can collect more observations, and is consistent with the result from \cite{BashGoeckelTowsley} that the number of bits per channel use is $O(\sqrt{N}/N) = O(1/\sqrt{N})$ as $N \rightarrow \infty$.

On the other hand, when using a jammer, the performance appears to improve with $N$, though the improvement is slight when $N$ is large. The performance not deteriorating for large $N$ is now consistent with the result of \cite{SobersBash}, that when using a jammer the number of bits per channel use is $O(N/N) = O(1)$ as $N \rightarrow \infty$. 
\begin{figure}[t!]
\centering 
\includegraphics[scale=0.6]{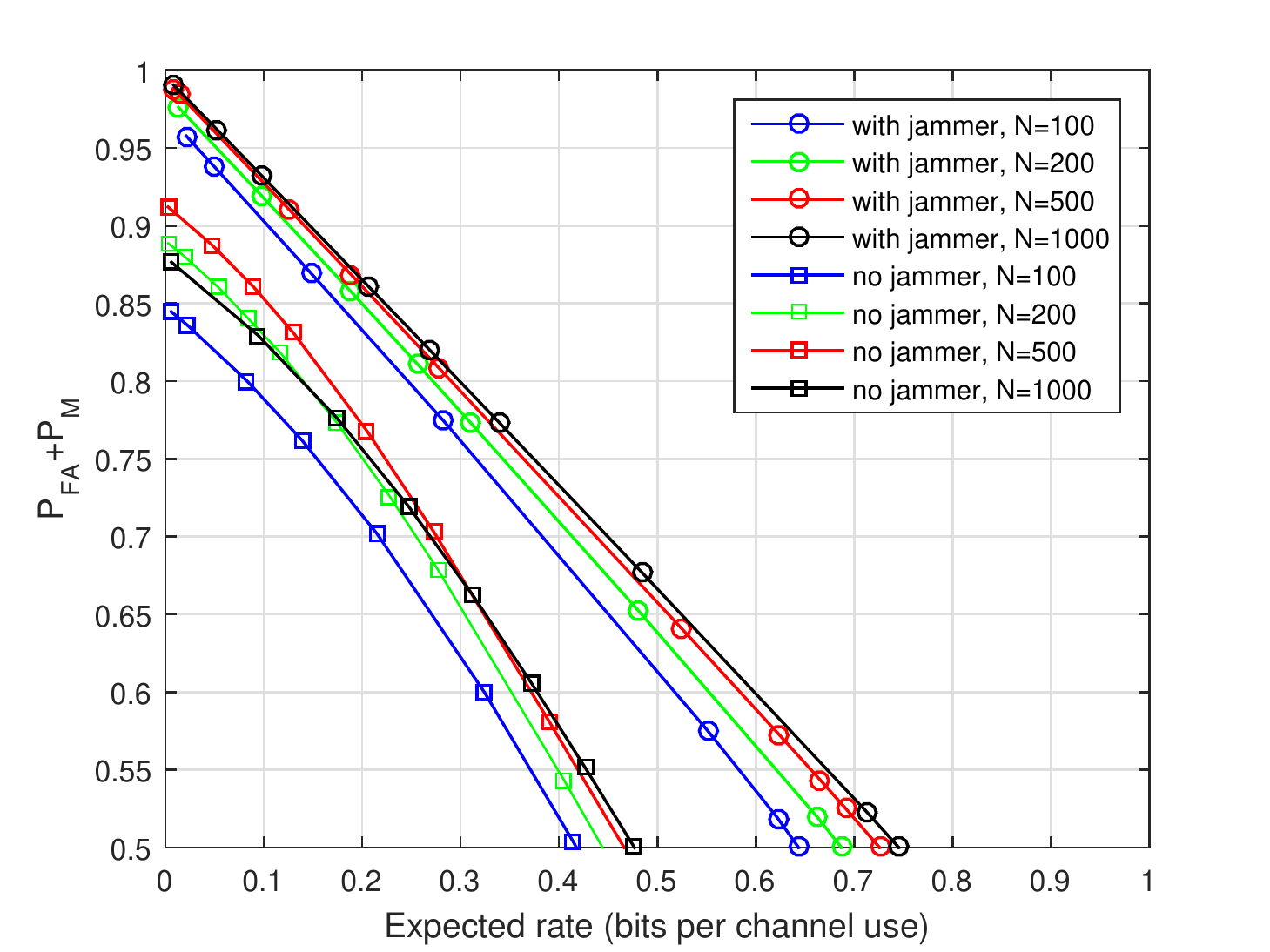} 
\caption{Expected rate per channel use vs. $\mathbb{P}_{FA}+\mathbb{P}_{M}$}
\label{rate_error_plot}
\end{figure} 

\subsection{Comparison with uniformly distributed and constant powers}
We will compare our approach with the case of uniformly distributed transmission powers that was proposed in \cite{Yan_delay_intolerant}. We consider the case $N=200$\footnote{Similar qualitative behaviour will also be observed for other values of $N$.} with the parameters $\sigma_b^2 = 0 \textnormal{ dB}$, $\sigma_w^2 = 0 \textnormal{ dB}$, $\delta = 0.1$. Fig. \ref{rate_error_plot_uniform} plots the trade-off between the expected rate per channel use and $\mathbb{P}_{FA}+\mathbb{P}_{M}$ for 1) our game-theoretic approach, 2) uniformly distributed powers, 3) constant powers. Also plotted is the performance of the game-theoretic approach with additional jammer. For uniformly distributed powers, we consider powers uniformly distributed among $(0.02 \textnormal{ mW}, 0.03 \textnormal{ mW}, \dots, 0.01 k \textnormal{ mW})$  for different values of $k \in \mathbb{N}$, in each case searching for and using  the detection threshold $t$ in $(0, 0.01, \dots, 3)$ which minimizes $\mathbb{P}_{FA}+\mathbb{P}_{M}$. For constant powers, we consider different constant transmission powers $0.02 \textnormal{ mW}, 0.03 \textnormal{ mW}, \dots$, and use in each case the detection threshold $t$ which minimizes $\mathbb{P}_{FA}+\mathbb{P}_{M}$. We see that for very strict covertness requirements (larger $\mathbb{P}_{FA}+\mathbb{P}_{M}$) all three approaches will give similar performance, but when the covertness requirement is less strict (smaller $\mathbb{P}_{FA}+\mathbb{P}_{M}$) our game-theoretic approach can achieve significantly higher rates. 
\begin{figure}[t!]
\centering 
\includegraphics[scale=0.6]{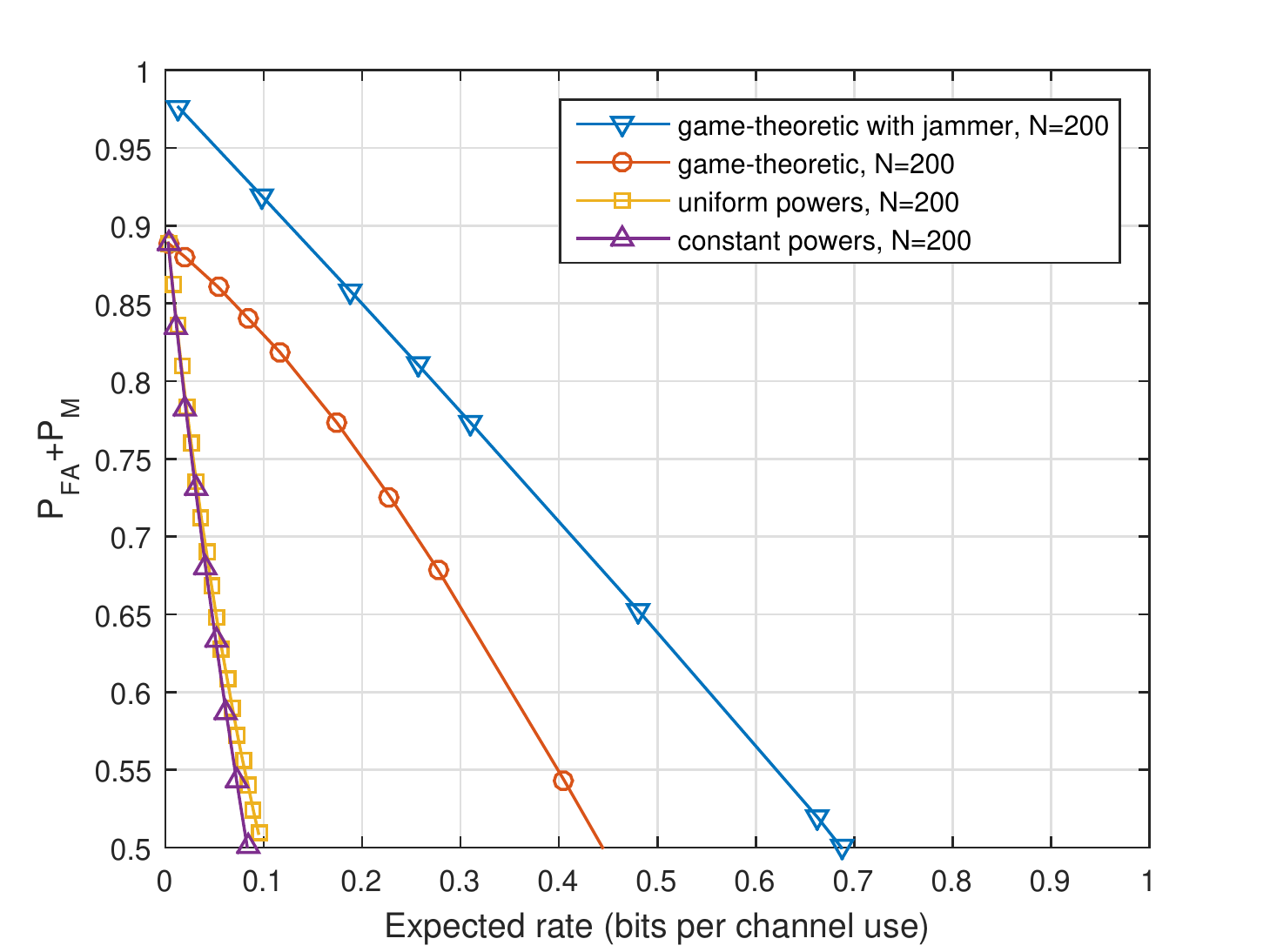} 
\caption{Expected rate per channel use vs. $\mathbb{P}_{FA}+\mathbb{P}_{M}$}
\label{rate_error_plot_uniform}
\end{figure} 

\section{Conclusion}
We have studied a game-theoretic approach to the finite blocklength covert communications problem, where Alice can randomly vary her transmit power and Willie can randomly vary his detection threshold. For less covert requirements, our game theoretic approach can achieve significantly higher coding rates than uniformly distributed transmit powers. An alternative scheme using a jammer has also been considered, with the formulation of a game between the jammer and Willie. We have shown that further performance gains can be achieved by the use of a jammer.

\bibliography{IEEEabrv,covert_estimation}
\bibliographystyle{IEEEtran}

\end{document}